\newcommand\numberthis{\addtocounter{equation}{1}\tag{\theequation}} 
\newtheorem{assume}{Assumption}
\newtheorem{condition}{Condition}
\newtheorem{prop}{Proposition}
\newtheorem{coro}{Corollary}
\begin{document}
	
	\def\spacingset#1{\renewcommand{\baselinestretch}%
		{#1}\small\normalsize} \spacingset{1.5}
	
	
		\title{\bf Synthetic Controls with Multiple Outcomes\thanks{The authors thank Barton Lee, Stanley Cho, Artem Prokhorov, Tatsushi Oka, Tue Gorgens, Firmin Doko Tchatoka, and seminar participants at UNSW, University of Sydney, the Australia and New Zealand Econometrics Study Group (ANZESG2020) at Monash University, and University of California Irvine, for their constructive suggestions and feedback. Tian acknowledges that this research was supported by the Australian Government Research Training Program (RTP) Scholarship.}}

		\author{Wei Tian\thinspace$^{\textrm{a}}$
			\qquad
			Seojeong Lee\thinspace$^{\textrm{b}}$\thanks{Corresponding author: s.jay.lee@snu.ac.kr}
			\qquad
			Valentyn Panchenko\thinspace$^{\textrm{c}}$}
		\date{\today}
		\maketitle

	\small
	\begin{center}
		{\small $^{\text{a,c}}$~\emph{School of Economics, University of New South Wales, Sydney NSW Australia} \\
			$^{\text{b}}$ \emph{Department of Economics, Seoul National University, Seoul, South Korea}}
	\end{center}
	
	\bigskip

	\begin{abstract}
We generalize the synthetic control (SC) method to a multiple-outcome framework, where the conventional pre-treatment time dimension is supplemented with the extra dimension of related outcomes in computing the SC weights.
This generalization improves the reliability of treatment effect estimation, and can be particularly useful for evaluating the effect of a treatment on multiple outcomes or when only a small number of pre-treatment periods are available.
To illustrate our method, we provide a new perspective on the classic SC application to the 1990 German reunification.
	\end{abstract}
	
	\noindent
	{\it Keywords:}  Synthetic control; Policy evaluation; Causal inference\\
    
	\noindent
	{\it JEL codes:}  C32, C54, I18
	\newpage
	\spacingset{2} 

	\section{Introduction}
	
	The Synthetic Control (SC) method \citep{abadie2003economic,abadie2010synthetic,abadie2015comparative,abadie2021jel} is a popular method for estimating the effect of a policy or intervention on an aggregate unit, such as a country or a city.\footnote{The SC methods were recently used to assess the impacts of the intervention policies during the COVID-19 pandemic \citep{tian2021effects,wang2022measuring}.} The procedure constructs a SC unit as a convex combination of the control units to minimize the difference between the outcome trajectories of the treated unit and the synthesized unit before the treatment, and then estimates the per-period treatment effects using the differences between the observed outcome of the treated unit and the counterfactual SC outcome after the treatment.
	
	Our major contribution is methodological.
	We generalize the conventional single-outcome SC method to a multiple-outcome framework by supplementing the time dimension with the extra dimension of related outcomes. This in effect incorporates additional pre-treatment matching variables and reduces the risk of overfitting.
	We show that the bias of the multiple-outcome SC method is inversely proportional to not only the number of pre-treatment periods as in \cite{abadie2010synthetic}, but also the number of related outcome variables. Hence, the bias is of a smaller order than in the conventional single-outcome SC estimator, which implies that the multiple-outcome SC estimator can be less biased than the conventional SC estimator using the same number of pre-treatment periods.
	
	In addition to the theoretical advantage, the multiple-outcome SC estimator has important practical advantages. Since the SC weights are calculated by matching both the related outcome variables and the pre-treatment periods, it can be applied even when the number of pre-treatment periods is  small.\footnote{Some previous studies have used the conventional SC method by matching on a single outcome with a small number of pre-treatment periods (e.g., \citealp{billmeier2013assessing} and \citealp{cavallo2013catastrophic}), which may produce biased estimates due to overfitting \citep{abadie2015comparative}.} The multiple-outcome SC can be used when we only observe multiple related outcomes in a single pre-treatment period. This is demonstrated in the 1990 German reunification application in Section~\ref{Ch2_sec_illust} and the treatment backdating exercise in the empirical application.
	Furthermore, our method can alleviate the concern on matching over a long period of time, which might be subject to misspecification error due to structural breaks in the relationship between the outcome of interest and the underlying predictors \citep{abadie2021jel}.\footnote{\cite{NBERw31942} raise a related point of emphasizing periods closer to the treatment more than distant periods. Our method addresses this by allowing the SC method to be used with a smaller number of more recent pre-treatment periods.}
	
	Many recent empirical studies analyze multiple outcomes using the SC method, see, e.g.,   \cite{ampofo2022oil,absher2020economic,chu2019joint}. However, they treat each outcome as a separate single-outcome SC application and do not realize the  benefits of matching on multiple outcomes. \cite{cattaneo2021prediction}  mention that researchers may want to match on multiple outcomes, but they do not investigate this further as their focus is on prediction intervals.
	\cite{klossner2018synthesizing} do account for inter-dependencies among multiple outcomes, although they use vector autoregressive models whereas we use factor models. Consequently, their method requires a large number of pre-treatment periods even with multiple outcomes, whereas our method allows a small number of pre-treatment periods in the presence of multiple related outcomes.
	
	A recent work by \cite{sunetal2023} adopts a similar setting to ours and analyzes the bias bounds for SC methods using multiple outcome variables. They compare separate weights for each outcome (the conventional SC), concatenated weights for multiple outcomes (the same as our approach), and average weights for the averaged outcomes. Their novel finding is that the average weights obtained from the averaged standardized outcome variables can reduce bias due to imperfect pre-treatment fit compared to both our approach and the conventional SC. However, as the authors note, equal-weighted averaging may reduce signals present in each of the outcome variables in some cases. Therefore, our approach provides a simple yet valid solution to jointly analyze multiple outcomes and complements the average weights approach of \cite{sunetal2023}.

	The rest of the paper is organized as follows.
	Section~\ref{Ch2_sec_theory} describes the theoretical framework for the multiple-outcome SC method.
	Section~\ref{Ch2_sec_sim} compares the multiple-outcome SC method with the conventional single-outcome SC method using Monte Carlo simulations. Section~\ref{Ch2_sec_illust} compares the methods further by replicating the 1990 German reunification analysis.
	Section \ref{Ch2_sec_con} concludes and provides additional guidance for applied researchers on using our method.
   	The technical details, proofs and extension to the treatment effect on untreated are collected in Appendix~A.
	The practical aspects of estimation and inference together with an illustrative empirical application studying the impacts of non-pharmaceutical interventions during the COVID-19 pandemic are available in the \href{https://drive.google.com/file/d/1LQvfM-_omq-LiCMpkI_iNxkGFUUKKBTT/view?usp=sharing}{Online Appendix}.

	\section{Theoretical Framework}\label{Ch2_sec_theory}
	
	\subsection{Multiple outcomes framework}
	
	Suppose that we observe $K$ outcomes in domain $\mathbb{K}=\{1,2,\dots,K\}$ for $J+1$ units over $T$ time periods, where a domain refers to a collection of related outcomes driven by the same set of observed and unobserved predictors (or factors). For example, the economic domain contains different measures of the economic performance, such as GDP, industrial production, retail sales, and CPI, which can be assumed to depend on the same set of underlying predictors such as infrastructure, technology, natural resources, demographic composition, work ethic, etc.
	
	Without loss of generality, we assume that the first unit ($i=1$) receives the treatment at period $T_0+1\le T$ and remains treated afterwards, while all the other $J$ units ($i=2,\dots,J+1$) are untreated throughout the window of observation.
	Denoting the binary treatment status for unit $i$ at time $t$ as $D_{it}$, we have $D_{it}=1$ for $i=1$ and $t>T_0$, and $D_{it}=0$ otherwise. We consider fixed $J$, large $T$ and $K$ asymptotics.

	We are interested in the effect of the treatment on a single or multiple outcomes in domain $\mathbb{K}$ for the treated unit after the treatment:
	\begin{align}\label{Ch2_eq_ITE}
		\tau_{1t,k}=Y_{1t,k}^1-Y_{1t,k}^0,\enspace t>T_0,\ k\in\mathbb{K},
	\end{align}
	where $Y_{1t,k}^1$ is the potential outcome under the treatment, and $Y_{1t,k}^0$ is the potential outcome without the treatment, so that the observed outcome can be written as $Y_{1t,k}=D_{1t}Y_{1t,k}^1+\left(1-D_{1t}\right)Y_{1t,k}^0$.\footnote{As in \cite{abadie2010synthetic}, we treat $\tau_{1t,k}$ as given once the sample is drawn.}
	Since we only observe the treated potential outcome but not the untreated potential outcome for unit 1 at $t>T_0$, we need to predict the counterfactual outcome $Y_{1t,k}^0$. 
	
	Suppose that the untreated potential outcome $k\in\mathbb{K}$ for unit $i$ at time $t$ is given by an interactive fixed effects model
	\begin{align}\label{Ch2_eq_IFE0}
		Y_{it,k}^0=\delta_{t,k}+\boldsymbol{Z}_i'\boldsymbol{\theta}_{t,k}+\boldsymbol{\mu}_i'\boldsymbol{\lambda}_{t,k}+\varepsilon_{it,k},
	\end{align}
	where $\delta_{t,k}$ is the time trend in outcome $k$, $\boldsymbol{Z}_i$ and $\boldsymbol{\mu}_i$ are the $r\times 1$ and $f\times 1$ vectors of observed and unobserved predictors of $Y_{it,k}^0$ with outcome-specific coefficients $\boldsymbol{\theta}_{t,k}$ and $\boldsymbol{\lambda}_{t,k}$, respectively, and $\varepsilon_{it,k}$ is the idiosyncratic transitory shock. The model in \eqref{Ch2_eq_IFE0} is similar to that of \cite{abadie2010synthetic} and \cite{abadie2021jel} but is allowed to have the outcome-specific constant and coefficients of observable and unobservable predictors. 
	
	Assuming that the individual-specific unobserved predictor $\boldsymbol{\mu}_{i}$ is common to the outcomes in the same domain is the key assumption in our model. If the outcomes depend on different sets of unobserved predictors (thus $\boldsymbol{\mu}_{i,k}$), then we lose the benefit of matching on multiple related outcomes in terms of having the same order of bias with the conventional SC method. Nevertheless, we argue that our assumption is no different from the standard factor analysis where a low number of common factors underlying related variables is assumed. Moreover, our model accommodates the case when the unobserved predictors are separable: $\boldsymbol{\mu}_{i,k}=\boldsymbol{\mu}_{i} + \boldsymbol{u}_{k}$ where $\boldsymbol{u}_{k}$ is outcome-specific predictors. 
	
	A few more remarks on the model. The model does not exclude the possibility that some outcomes in the domain depend on other predictors that are serially uncorrelated and independent from the included predictors and the treatment status, which can thus be treated as part of the transitory shocks. In addition, the coefficients may contain zero so that the corresponding predictors may affect some outcomes in some periods, but not all outcomes in all periods, as long as there is enough variation in the coefficients across different pre-treatment periods or outcomes, as specified in Condition \ref{Ch2_assume_rank}.
	
	There is strong suggestive evidence for our model. We show in Section~\ref{Ch2_sec_illust}, that the SC for West Germany constructed by matching on multiple economic variables just in the single year of 1989, can track the trajectory of West Germany's GDP closely for 30 years, and the trajectory for the multiple-outcome SC after the treatment is also very similar to that of the SC constructed by matching on 30 years of pre-treatment GDP.

We impose technical conditions similar to those in \cite{abadie2010synthetic} and \cite{botosaru2019role}. 

			\begin{condition}[]\label{Ch2_assume_error} Transitory shocks  
			\leavevmode
			\begin{enumerate}[label=\arabic*)]
				\item $\varepsilon_{it,k}$ are independent across $i$, $t$, $k$;\footnote{Note that despite the i.i.d. condition on $\varepsilon_{it,k}$, the unobserved interactive fixed effects may account for the correlations along the corresponding dimensions. In practice, the outcomes may have different scales or volatilities so that the transitory shocks may be clustered at the outcome level. This complexity can be dealt with by standardizing each outcome in each period before matching.}
				\item $\mathbbm{E}(\varepsilon_{it,k}\mid \boldsymbol{Z}_{j},\boldsymbol{\mu}_j)=0$ for all $i$, $j$, $t$ and $k$;
				\item $\mathbb{E}|\varepsilon_{jt,k}|^p<\infty$ for all $j=2,\dots,J+1$, $t\le T_0$, $k\in\mathbb{K}$ and some even integer $p\ge2$.
			\end{enumerate}
		\end{condition}			
							

	\begin{condition}[]\label{Ch2_assume_rank}
	The smallest eigenvalue of $\frac{1}{KT_0}\sum_{k=1}^{K}\sum_{t=1}^{T_0}\boldsymbol{\lambda}_{t,k}\boldsymbol{\lambda}_{t,k}'$ is bounded from below by some positive number $\underline{\xi}$.
\end{condition}

	A SC is constructed using a convex combination of the control units such that the SC matches the treated unit in terms of the observed predictors and the pre-treatment values of the $K$ related outcomes. This can be achieved if the matching variables of the treated unit is in the convex hull of those of the control units.
	
	\begin{condition}[]\label{Ch2_assume_perfect_fit}
		There exists a set of weights $\left(\hat{w}_2,\dots,\hat{w}_{J+1}\right)$ such that $\hat{w}_j\ge 0$ for $j=2,\dots,J+1$, $\sum_{j=2}^{J+1}\hat{w}_j=1$, and for all $t\le T_0$ and $k\in\mathbb{K}$     
    \begin{equation}
  \sum_{j=2}^{J+1} \hat{w}_j\boldsymbol{Z}_j = \boldsymbol{Z}_1, \qquad
  \sum_{j=2}^{J+1} \hat{w}_jY_{jt,k}  =  Y_{1t,k}.
    \label{eq:pretreat_fit}
    \end{equation}
	\end{condition}

  In practice, this condition may hold only approximately and the SC weights are obtained by minimizing a weighted sum of the squared distance between the left-hand and the right-hand sides of \eqref{eq:pretreat_fit}.

The multiple-outcome SC estimator for $\tau_{1t,k}$ is then constructed as
	\begin{align}\label{Ch2_eq_SCM}
		\widehat{\tau}_{1t,k}=Y_{1t,k}-\sum_{j=2}^{J+1} \hat{w}_jY_{jt,k}.
	\end{align}

 	Condition~3  is analogous to the perfect fit condition of \cite{abadie2010synthetic} (their Eq.~2), and is used to show the asymptotic unbiasedness of the SC estimator. This condition is unarguably strong, but some recent papers provide justification for the SC method when the condition holds only approximately.  \cite{botosaru2019role} relax the perfect fit condition for the observed predictors. \cite{ferman2021synthetic} show that under typical large $T$ asymptotics, the bias of the conventional SC estimator does not disappear as the number of pre-treatment periods increases if the perfect fit condition does not hold. Nonetheless, they show that a demeaned version of the SC estimator can still perform better than the difference-in-difference estimator. \cite{ferman2021properties} relaxes the condition by using large $T$, large $J$ asymptotics. 
	More recently, \cite{zwz2023} show the asymptotic optimality of the SC estimator under imperfect pre-treatment fit, in the sense that it achieves the lowest possible squared prediction error among all possible treatment effect estimators that are based on an average of control units, and this asymptotic optimality continues to hold without assuming a linear factor model. This finding provides confidence that the SC method is still the best choice among all estimators of a similar construction under imperfect pre-treatment fit.
	Although we do not attempt an extension of our multiple-outcome SC to a more general setting in this paper, we believe that a similar optimality can be established under suitable conditions.

	To facilitate a direct comparison between the multiple-outcome SC estimator and the conventional single-outcome SC estimator based only on the $k$th outcome, let $\{\tilde{w}_j^{(k)}\}_{j=2}^{J+1}$ be the single-outcome SC weights such that $\tilde{w}_j^{(k)}\ge 0$ for $j=2,\dots,J+1$, $\sum_{j=2}^{J+1}\tilde{w}_j^{(k)}=1$, $\sum_{j=2}^{J+1} \tilde{w}_j^{(k)}\boldsymbol{Z}_j=\boldsymbol{Z}_1$ and $\sum_{j=2}^{J+1} \tilde{w}_j^{(k)}Y_{jt,k}=Y_{1t,k}$ for all $t\le T_0$.
	The conventional SC estimator for $\tau_{1t,k}$ is 
	\begin{align}\label{Ch2_eq_SCM_single}
		\widetilde{\tau}_{1t,k}=Y_{1t,k}-\sum_{j=2}^{J+1} \tilde{w}_j^{(k)}Y_{jt,k}.
	\end{align}
	
	The following proposition shows that the bias of the multiple-outcome SC method is reducing faster than that of the conventional single-outcome SC method. 
	
	\begin{prop}\label{Ch2_prop_1}
		Suppose that $J$ is fixed, whereas $T_0$ and $K$ are increasing, and the technical conditions are satisfied, then, for any $t>T_0$,
		\begin{align*}
			\left\vert \mathbb{E}(\widetilde{\tau}_{1t,k})-\tau_{1t,k}\right\vert & =O\left(\frac{1}{\sqrt{T_{0}}}\right),  \\
			\left\vert \mathbb{E}(\widehat{\tau}_{1t,k})-\tau_{1t,k}\right\vert   & =O\left(\frac{1}{\sqrt{KT_{0}}}\right).
		\end{align*}
	\end{prop}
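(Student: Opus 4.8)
The plan is to adapt the bias calculation of \cite{abadie2010synthetic} while treating the pre-treatment periods and the $K$ related outcomes symmetrically as matching directions. First I would expand the estimation error. For $t>T_{0}$ unit $1$ is treated and units $2,\dots,J+1$ are not, so $Y_{1t,k}=Y_{1t,k}^{1}$ and $Y_{jt,k}=Y_{jt,k}^{0}$; plugging \eqref{Ch2_eq_IFE0} into \eqref{Ch2_eq_SCM} and using the exact fit $\sum_{j}\hat w_{j}\boldsymbol Z_{j}=\boldsymbol Z_{1}$ with $\sum_{j}\hat w_{j}=1$ gives
\[
\widehat\tau_{1t,k}-\tau_{1t,k}=Y_{1t,k}^{0}-\sum_{j=2}^{J+1}\hat w_{j}Y_{jt,k}^{0}=\boldsymbol\lambda_{t,k}'\boldsymbol\nu+\varepsilon_{1t,k}-\sum_{j=2}^{J+1}\hat w_{j}\varepsilon_{jt,k},
\]
where $\boldsymbol\nu:=\boldsymbol\mu_{1}-\sum_{j=2}^{J+1}\hat w_{j}\boldsymbol\mu_{j}$ is the discrepancy in the unobserved loadings, which is a single $f\times1$ vector precisely because $\boldsymbol\mu_{i}$ is assumed common across the outcomes in the domain. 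The single-outcome error has the same form with $\hat w_{j}$ replaced by $\tilde w_{j}^{(k)}$ and $\boldsymbol\nu$ by $\boldsymbol\nu^{(k)}:=\boldsymbol\mu_{1}-\sum_{j}\tilde w_{j}^{(k)}\boldsymbol\mu_{j}$.

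Second, I would use the pre-treatment fit to solve for $\boldsymbol\nu$. Subtracting $\sum_{j}\hat w_{j}Y_{js,\ell}$ from $Y_{1s,\ell}$ and invoking \eqref{eq:pretreat_fit} shows that, for every $s\le T_{0}$ and $\ell\in\mathbb K$,
\[
\boldsymbol\lambda_{s,\ell}'\boldsymbol\nu=\sum_{j=2}^{J+1}\hat w_{j}\varepsilon_{js,\ell}-\varepsilon_{1s,\ell}=:\eta_{s,\ell}.
\]
Multiplying each of these $KT_{0}$ identities by $\boldsymbol\lambda_{s,\ell}$ and summing yields $\boldsymbol M\boldsymbol\nu=\sum_{s\le T_{0},\,\ell\in\mathbb K}\boldsymbol\lambda_{s,\ell}\eta_{s,\ell}$ with $\boldsymbol M:=\sum_{s\le T_{0},\,\ell\in\mathbb K}\boldsymbol\lambda_{s,\ell}\boldsymbol\lambda_{s,\ell}'$. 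Under Condition~\ref{Ch2_assume_rank}, which I would state so that $\lambda_{\min}(\boldsymbol M)\ge\underline{\xi}\,KT_{0}$ for some $\underline\xi>0$, the matrix $\boldsymbol M$ is invertible and $\boldsymbol\nu=\boldsymbol M^{-1}\sum_{s,\ell}\boldsymbol\lambda_{s,\ell}\eta_{s,\ell}$. This is where the extra outcomes pay off: the same $f$-vector $\boldsymbol\nu$ is now constrained by $KT_{0}$ linear equations rather than $T_{0}$, so the information matrix $\boldsymbol M$ accumulates at rate $KT_{0}$; for the single-outcome estimator the corresponding matrix is $\boldsymbol M^{(k)}:=\sum_{s\le T_{0}}\boldsymbol\lambda_{s,k}\boldsymbol\lambda_{s,k}'$, whose smallest eigenvalue grows only at rate $T_{0}$.

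Third, I would take expectations and bound. Since the post-treatment shocks are mean zero and independent of the pre-treatment data (hence of $\hat w_{j}$), the last two terms in the error expansion vanish in expectation, leaving $\mathbb E[\widehat\tau_{1t,k}-\tau_{1t,k}]=\mathbb E[\boldsymbol\lambda_{t,k}'\boldsymbol M^{-1}\sum_{s,\ell}\boldsymbol\lambda_{s,\ell}\eta_{s,\ell}]$. Using $\sum_{j}\hat w_{j}=1$, I would rewrite the integrand as $\sum_{j=2}^{J+1}\hat w_{j}A_{j}$ with $A_{j}:=\sum_{s\le T_{0},\,\ell}(\boldsymbol\lambda_{t,k}'\boldsymbol M^{-1}\boldsymbol\lambda_{s,\ell})(\varepsilon_{js,\ell}-\varepsilon_{1s,\ell})$, which no longer contains the weights, so that $|\boldsymbol\lambda_{t,k}'\boldsymbol\nu|\le\max_{j}|A_{j}|$ and, by Jensen's inequality,
\[
\bigl|\mathbb E[\widehat\tau_{1t,k}-\tau_{1t,k}]\bigr|\le\sum_{j=2}^{J+1}\mathbb E|A_{j}|\le\sum_{j=2}^{J+1}\sqrt{\mathbb E A_{j}^{2}}.
\]
Conditioning on the loadings and using that the $\varepsilon$'s are mean zero, uncorrelated across $(s,\ell)$ and across units, with variances at most $\bar\sigma^{2}$, I get $\mathbb E A_{j}^{2}\le2\bar\sigma^{2}\sum_{s,\ell}(\boldsymbol\lambda_{t,k}'\boldsymbol M^{-1}\boldsymbol\lambda_{s,\ell})^{2}=2\bar\sigma^{2}\,\boldsymbol\lambda_{t,k}'\boldsymbol M^{-1}\boldsymbol\lambda_{t,k}$, where the identity uses $\sum_{s,\ell}\boldsymbol\lambda_{s,\ell}\boldsymbol\lambda_{s,\ell}'=\boldsymbol M$. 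Bounding $\|\boldsymbol\lambda_{t,k}\|$ and applying $\lambda_{\min}(\boldsymbol M)\ge\underline\xi KT_{0}$ gives $\mathbb E A_{j}^{2}=O(1/(KT_{0}))$; summing over the fixed number $J$ of control units yields $|\mathbb E[\widehat\tau_{1t,k}-\tau_{1t,k}]|=O(1/\sqrt{KT_{0}})$. Running the identical argument with $\boldsymbol M^{(k)}$ in place of $\boldsymbol M$ reproduces the single-outcome rate $O(1/\sqrt{T_{0}})$.

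The main obstacle is the dependence between the data-driven weights $\hat w_{j}$ and the idiosyncratic shocks $\varepsilon_{js,\ell}$ entering $\eta_{s,\ell}$: a direct norm bound on $\sum_{s,\ell}\boldsymbol\lambda_{s,\ell}\eta_{s,\ell}$ only gives order $KT_{0}$ and hence a useless $O(1)$ bias bound. The device that resolves it is the rewriting $\boldsymbol\lambda_{t,k}'\boldsymbol\nu=\sum_{j}\hat w_{j}A_{j}$ with $\sum_{j}\hat w_{j}=1$, which replaces the random weights by a maximum over the \emph{fixed} set of controls and leaves inside only the mean-zero, serially-uncorrelated sum $A_{j}$, whose second moment is exactly $2\bar\sigma^{2}\boldsymbol\lambda_{t,k}'\boldsymbol M^{-1}\boldsymbol\lambda_{t,k}$. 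The second delicate point is the precise statement of Condition~\ref{Ch2_assume_rank}: it must guarantee that the $f\times f$ loading second-moment matrix built from \emph{all} $(s,\ell)$ pairs has smallest eigenvalue of order $KT_{0}$ (not merely $T_{0}$), which is the quantitative content of the assumption that the related outcomes share the common loadings $\boldsymbol\mu_{i}$ and carry non-redundant information about them.
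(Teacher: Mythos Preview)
Your proposal is correct and follows the same overall architecture as the paper's proof: the same error decomposition into a factor-loading discrepancy $\boldsymbol\nu$ plus post-treatment noise, the same device of stacking the $KT_{0}$ perfect-fit identities and inverting $\boldsymbol M=\boldsymbol\lambda'\boldsymbol\lambda$ under Condition~\ref{Ch2_assume_rank} to express $\boldsymbol\nu$ in terms of pre-treatment shocks, and the same key trick of using the convexity of the weights to decouple $\hat w_{j}$ from the noise sums. Two technical choices differ. First, to bound $\mathbb E|A_{j}|$ the paper applies H\"older's and then Rosenthal's inequality for a general even $p$ (which is why Condition~\ref{Ch2_assume_error} asks for $p$th moments), whereas you use only Cauchy--Schwarz and the clean identity $\sum_{s,\ell}(\boldsymbol\lambda_{t,k}'\boldsymbol M^{-1}\boldsymbol\lambda_{s,\ell})^{2}=\boldsymbol\lambda_{t,k}'\boldsymbol M^{-1}\boldsymbol\lambda_{t,k}$; your route is more elementary and needs only second moments, at the cost of the slightly sharper $J^{1/p}$ dependence the paper obtains for large $p$. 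Second, for the single-outcome rate the paper writes $\tilde w_{j}^{(k)}=\hat w_{j}+\tilde v_{j}^{(k)}$ and splits the single-outcome bias into the multiple-outcome bias plus an extra term $B_{4t,k}$ of order $T_{0}^{-1/2}$, which makes explicit that the single-outcome bias \emph{contains} the multiple-outcome bias; you instead rerun the whole argument with $\boldsymbol M^{(k)}$ in place of $\boldsymbol M$, which is shorter and delivers the stated rate directly but does not exhibit that nesting.
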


	Since the bias of the multiple-outcome SC estimator reduces as the number of pre-treatment periods and the number of related outcomes increases, in practice we can use the SC method even when the number of pre-treatment periods is  small, if multiple related outcomes are available and the treated unit can be closely approximated by the SC for these outcomes in the pre-treatment periods. In Appendix~A, we compare the  multiple-outcome SC weights with the single-outcome SC weights and find that they are generically different.  

	The estimation and inference procedures are similar to those of the conventional SC method \citep*{abadie2021jel}, by including the multiple related outcomes in the pre-treatment periods as matching variables (see their practical implementation for the empirical application in the \href{https://drive.google.com/file/d/1LQvfM-_omq-LiCMpkI_iNxkGFUUKKBTT/view?usp=sharing}{Online Appendix}).

	\subsection{Adjusting for differences in levels}\label{Ch2_sec_levels}

The conventional SC method requires the treated unit to be in the convex hull of the control units in terms of the pre-treatment matching variables. However, there are cases where the values of the matching variables are extreme for the treated unit, such that no convex combination of the control units can closely approximate the outcome of the treated unit in the pre-treatment periods. In particular, it is often the case in practice that there are relatively stable differences in the level of the outcomes across units before the treatment.
In such cases, \cite{ferman2021synthetic} and \cite{abadie2021jel} suggest constructing the SC using demeaned outcomes, i.e., outcomes measured in differences with respect to their pre-treatment means. This enables the SC to track the dynamics in the outcome of the treated unit over time, while allowing the levels to differ by a constant amount, which is similar to the ``parallel trends'' assumption in the difference-in-differences method.\footnote{Using demeaned outcomes is also similar to a proposal in \cite{doudchenko2017}, which includes an intercept when minimizing the difference between the SC and the treated unit in the matching variables.} Apart from allowing a better pre-treatment fit for the treated unit, using demeaned outcomes has the additional merit that it helps correct the size distortion of the permutation test based on the post-to-pre-treatment RMSPE (Root Mean Squared Prediction Error) ratios obtained from permuting the treatment status among all units.


In the multiple-outcome framework, the relative position of the units can vary across different outcomes, making it difficult to match on multiple outcomes simultaneously. The demeaning process would be helpful in these circumstances by improving the pre-treatment fits.

Provided that $T_{0}\geq2$, replacing the outcomes, $Y_{is,k}$ in \eqref{eq:pretreat_fit} with demeaned outcomes $\dot{Y}_{it,k}=Y_{it,k}-\frac{1}{T_0}\sum_{s=1}^{T_0} Y_{is,k}$, we obtain a new set of weights, $\left(\hat{w}_2,\dots,\hat{w}_{J+1}\right)$. We can then construct the multiple-outcome SC estimator for $\tau_{1t,k}$ using the demeaned outcomes as
\begin{align}\label{Ch2_eq_SCM1}
	\widehat{\tau}_{1t,k}^{\text{DM}}=\dot{Y}_{1t,k}-\sum_{j=2}^{J+1} \hat{w}_j\dot{Y}_{jt,k}
\end{align}
to account for the differences in the level of the outcomes. Similarly with Proposition \ref{Ch2_prop_1}, the bias of this estimator can be shown to shrink as the number of related outcomes and pre-treatment periods goes to infinity.

\begin{coro}\label{Ch2_coro_1}
	Suppose that $J$ is fixed, whereas $T_0$ and $K$ are increasing, the demeaned outcomes are used, and the technical conditions are satisfied, then for any $t>T_0$,
	\begin{align*}
		\left\vert \mathbb{E}(\widehat{\tau}_{1t,k}^{\text{DM}})-\tau_{1t,k}\right\vert & =O\left(\frac{1}{\sqrt{KT_{0}}}\right).
	\end{align*}
\end{coro}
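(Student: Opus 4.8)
The plan is to mirror the proof of Proposition~\ref{Ch2_prop_1} step for step, with the pre-treatment loading matrix and the idiosyncratic shocks replaced by their pre-treatment-demeaned counterparts, and then to check that the extra terms created by demeaning do not change the order of the bias. First I would reduce the claim to a statement about the fit of the counterfactual: for $t>T_0$, unit~$1$ is treated so $Y_{1t,k}=Y_{1t,k}^1$, while all pre-treatment outcomes and all control outcomes are untreated; substituting this into \eqref{Ch2_eq_SCM1} and subtracting $\tau_{1t,k}=Y_{1t,k}^1-Y_{1t,k}^0$, the treated potential outcome cancels and $\widehat{\tau}_{1t,k}^{\text{DM}}-\tau_{1t,k}=\dot{Y}_{1t,k}^{0}-\sum_{j\geq2}\hat{w}_j\dot{Y}_{jt,k}^{0}$, where $\dot{Y}_{it,k}^{0}=Y_{it,k}^{0}-\frac1{T_0}\sum_{s\leq T_0}Y_{is,k}^{0}$ (demeaned over pre-treatment periods, with $t$ possibly exceeding $T_0$). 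Plugging in \eqref{Ch2_eq_IFE0} and its demeaned form, the common demeaned trend $\dot{\delta}_{t,k}$ cancels because $\sum_j\hat{w}_j=1$ and the observed-predictor term cancels because $\sum_j\hat{w}_j\boldsymbol{Z}_j=\boldsymbol{Z}_1$, leaving $\widehat{\tau}_{1t,k}^{\text{DM}}-\tau_{1t,k}=\big(\boldsymbol{\mu}_1-\sum_{j\geq2}\hat{w}_j\boldsymbol{\mu}_j\big)'\dot{\boldsymbol{\lambda}}_{t,k}+\dot{\varepsilon}_{1t,k}-\sum_{j\geq2}\hat{w}_j\dot{\varepsilon}_{jt,k}$.

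Next I would pin down the unobserved-loading gap using the demeaned pre-treatment fit. Running the previous identity in reverse for each $s\leq T_0$ and $k\in\mathbb{K}$ gives $\big(\boldsymbol{\mu}_1-\sum_{j\geq2}\hat{w}_j\boldsymbol{\mu}_j\big)'\dot{\boldsymbol{\lambda}}_{s,k}=-\big(\dot{\varepsilon}_{1s,k}-\sum_{j\geq2}\hat{w}_j\dot{\varepsilon}_{js,k}\big)$; stacking the $KT_0$ equations into the demeaned loading matrix $\dot{\boldsymbol{\Lambda}}$ (rows $\dot{\boldsymbol{\lambda}}_{s,k}'$) and the shock vectors $\dot{\boldsymbol{\epsilon}}_j$, and invoking the demeaned version of the rank condition (Condition~\ref{Ch2_assume_rank}, which forces $T_0\geq2$ so that the demeaned loadings are nondegenerate and $\dot{\boldsymbol{\Lambda}}$ has full column rank with $\sigma_{\min}(\dot{\boldsymbol{\Lambda}})\asymp\sqrt{KT_0}$), I can solve $\boldsymbol{\mu}_1-\sum_{j\geq2}\hat{w}_j\boldsymbol{\mu}_j=-(\dot{\boldsymbol{\Lambda}}'\dot{\boldsymbol{\Lambda}})^{-1}\dot{\boldsymbol{\Lambda}}'\big(\dot{\boldsymbol{\epsilon}}_1-\sum_{j\geq2}\hat{w}_j\dot{\boldsymbol{\epsilon}}_j\big)$ and substitute back, obtaining $\widehat{\tau}_{1t,k}^{\text{DM}}-\tau_{1t,k}=-\boldsymbol{v}'\big(\dot{\boldsymbol{\epsilon}}_1-\sum_{j\geq2}\hat{w}_j\dot{\boldsymbol{\epsilon}}_j\big)+\dot{\varepsilon}_{1t,k}-\sum_{j\geq2}\hat{w}_j\dot{\varepsilon}_{jt,k}$, where $\boldsymbol{v}=\dot{\boldsymbol{\Lambda}}(\dot{\boldsymbol{\Lambda}}'\dot{\boldsymbol{\Lambda}})^{-1}\dot{\boldsymbol{\lambda}}_{t,k}$ is deterministic with $\|\boldsymbol{v}\|=O((KT_0)^{-1/2})$.

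Taking expectations, the quadratic-form term is bounded exactly as in Proposition~\ref{Ch2_prop_1}: $\mathbb{E}[\boldsymbol{v}'\dot{\boldsymbol{\epsilon}}_1]=0$ since $\boldsymbol{v}$ is deterministic and $\dot{\boldsymbol{\epsilon}}_1$ is mean zero, whereas $\big|\mathbb{E}\big[\boldsymbol{v}'\sum_{j\geq2}\hat{w}_j\dot{\boldsymbol{\epsilon}}_j\big]\big|\leq\max_j\mathbb{E}|\boldsymbol{v}'\dot{\boldsymbol{\epsilon}}_j|\leq\max_j\sqrt{\boldsymbol{v}'\mathrm{Cov}(\dot{\boldsymbol{\epsilon}}_j)\boldsymbol{v}}=O(\|\boldsymbol{v}\|)=O((KT_0)^{-1/2})$, using $\hat{w}_j\in[0,1]$, $\sum_j\hat{w}_j=1$, and the bounded covariance of the (demeaned) shocks. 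The genuinely new terms are $\dot{\varepsilon}_{1t,k}=\varepsilon_{1t,k}-\frac1{T_0}\sum_{s\leq T_0}\varepsilon_{1s,k}$ and $\sum_{j\geq2}\hat{w}_j\dot{\varepsilon}_{jt,k}=\sum_{j\geq2}\hat{w}_j\big(\varepsilon_{jt,k}-\frac1{T_0}\sum_{s\leq T_0}\varepsilon_{js,k}\big)$, and I would argue both have zero expectation: $\varepsilon_{1t,k}$ and $\frac1{T_0}\sum_{s\leq T_0}\varepsilon_{1s,k}$ are mean zero; $\varepsilon_{jt,k}$ with $t>T_0$ is independent of the pre-treatment data and hence of $\hat{w}_j$; and $\frac1{T_0}\sum_{s\leq T_0}\varepsilon_{js,k}$ is mean zero and, under the distributional conditions on the shocks (which make the within-period demeaned shocks — the only randomness entering $\hat{w}$ — independent of the pre-treatment period averages), uncorrelated with $\hat{w}_j$. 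Hence $|\mathbb{E}(\widehat{\tau}_{1t,k}^{\text{DM}})-\tau_{1t,k}|=O((KT_0)^{-1/2})$.

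The step I expect to be the main obstacle is ensuring the rank/eigenvalue machinery survives demeaning. Subtracting the pre-treatment mean annihilates any latent factor whose loading path is constant over the pre-treatment window and shrinks the singular values attached to nearly constant loadings, so the rank condition must be imposed directly on $\dot{\boldsymbol{\Lambda}}$ rather than on the undemeaned $\boldsymbol{\Lambda}$ — equivalently, after absorbing a unit-and-outcome-specific constant, the $f$ remaining factors must have loadings with enough variation across pre-treatment periods or outcomes — and one must verify both that $T_0\geq2$ suffices and that the post-treatment demeaned loading $\dot{\boldsymbol{\lambda}}_{t,k}$ lies in the column space on which $\dot{\boldsymbol{\Lambda}}$ is well conditioned, so that $\|\boldsymbol{v}\|$ is genuinely of order $(KT_0)^{-1/2}$. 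The closely related delicate point is the one flagged above, namely that the demeaning-induced sample means contribute nothing to the bias, which leans on the independence structure of the idiosyncratic shocks in the technical conditions; beyond these two points the argument is the bookkeeping already carried out for Proposition~\ref{Ch2_prop_1}.
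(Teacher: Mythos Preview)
Your approach is essentially the same as the paper's: you reduce $\widehat{\tau}_{1t,k}^{\text{DM}}-\tau_{1t,k}$ to $\dot{Y}_{1t,k}^{0}-\sum_{j}\hat{w}_j\dot{Y}_{jt,k}^{0}$, observe that the demeaned model retains the interactive fixed effects structure $\dot{Y}_{it,k}^{0}=\dot{\delta}_{t,k}+\boldsymbol{Z}_i'\dot{\boldsymbol{\theta}}_{t,k}+\boldsymbol{\mu}_i'\dot{\boldsymbol{\lambda}}_{t,k}+\dot{\varepsilon}_{it,k}$, and then rerun the Proposition~\ref{Ch2_prop_1} argument with dotted quantities; the paper's proof does exactly this in three lines and then says ``follow similar steps.'' You supply considerably more detail than the paper does, and you correctly flag the two places where the translation is not automatic --- that Condition~\ref{Ch2_assume_rank} must be reimposed on $\dot{\boldsymbol{\Lambda}}$, and that the demeaned post-treatment shock $\dot{\varepsilon}_{jt,k}$ contains the pre-treatment average $\bar{\varepsilon}_{j,k}$, which is not automatically independent of $\hat{w}_j$ under Condition~\ref{Ch2_assume_error} alone --- both of which the paper passes over in silence.
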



	\section{Monte Carlo Simulations}\label{Ch2_sec_sim}
	
We conduct Monte Carlo simulations to compare the multiple-outcome SC method and the conventional single-outcome SC method. 	
	The number of post-treatment periods is fixed at 1, and the number of units at 30, with a single treated unit and 29 control units.
	The treatment effect is set to 0, so that the treated potential outcomes are the same with the untreated potential outcomes.\footnote{The zero treatment effect is set to investigate the size of the test. It does not affect the bias and the standard deviation of the SC estimator.}
	The data generating process (DGP) is as follows:
	\begin{align}\label{Ch2_eq_IFE_sim}
		Y_{it,k}=\delta_{t,k}+\boldsymbol{Z}_i'\boldsymbol{\theta}_{t,k}+\boldsymbol{\mu}_i'\boldsymbol{\lambda}_{t,k}+\varepsilon_{it,k},
	\end{align}
	where $\boldsymbol{Z}_i$ is the vector of 2 observed predictors and $\boldsymbol{\mu}_i$ is the vector of 4 unobserved predictors. 
	The observed and unobserved predictors, $\boldsymbol{Z}_i$ and $\boldsymbol{\mu}_i$, are drawn independently from the uniform distribution $U[-1,1]$ for the control units, and $U[-d,d]$ with $d\in[0,1]$ for the treated unit.
	When $d=1$, the treated unit is equally likely to obtain extreme values in the outcomes as the control units. When $d$ is smaller, the treated unit is more likely to be in the convex hull of the control units, and thus have better pre-treatment fits.
	
	To generate outcomes that are closer to real data, where there are often clear differences in the level of the outcomes across units and that the levels are relatively stable over time, we set the variance of the mean of the coefficients to be large relative to the variance of the coefficients and the transitory shocks. As such, the time trend ($\delta_{t,k}$) and the coefficients ($\boldsymbol{\theta}_{t,k}$ and $\boldsymbol{\lambda}_{t,k}$) are drawn independently from the normal distribution $N(\omega_k,1)$ with $\omega_k\sim~N(0,10^2)$, and the transitory shocks, $\varepsilon_{it,k}$, are drawn independently from the standard normal distribution.\footnote{When $\omega_k$ is a constant, there would be no distinguishable levels in the outcomes for different units, and demeaning the outcomes would not be useful in this case.}
	
	Figure \ref{Ch2_fig_sim} displays the trajectories for two of the related outcomes from a typical simulated sample with $d=1$ and $T_0=10$. The trajectories for the treated unit are in black, and the control units in gray. As intended, there are visible and stable differences in the level of the outcomes, and the levels for the treated unit are different across different outcomes. Although the two outcomes share the same underlying predictors, they may appear different due to differences in the time trends, the coefficients on the predictors and the transitory shocks. Overall, the trajectories of the outcomes generated in our simulation look very similar to those in real data, e.g., as seen in the descriptive graphs for the outcomes in the empirical application (Figure B.2 in the \href{https://drive.google.com/file/d/1LQvfM-_omq-LiCMpkI_iNxkGFUUKKBTT/view?usp=sharing}{Online Appendix}).
	\begin{figure}[!h]
		\centering
		\includegraphics[width=.7\linewidth]{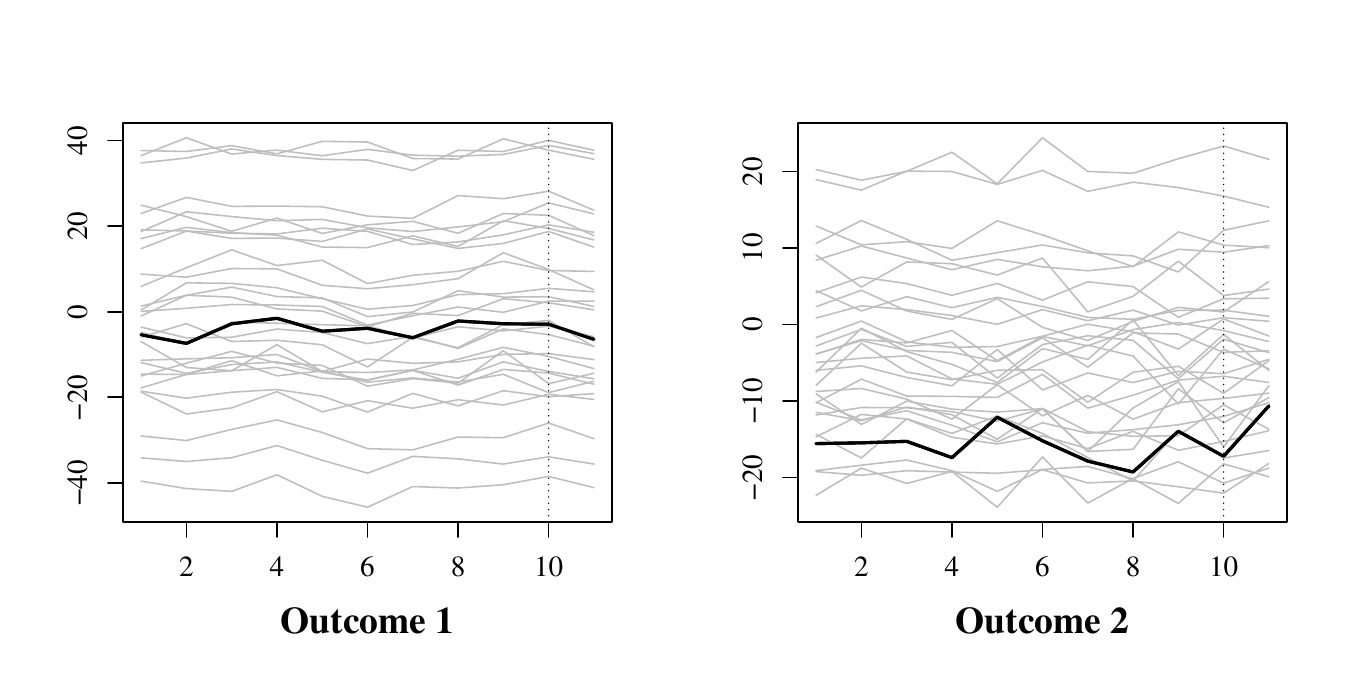}
		\caption{A Simulated Example}
		\label{Ch2_fig_sim}
	\end{figure}

	There are 5000 simulations for each pair of $d$ and $T_0$. In each simulation, the observed and unobserved predictors are drawn only once so that the outcomes are related by sharing the same underlying predictors, while the time trend, the coefficients and the transitory shocks are drawn independently across different outcomes.
	We compare the conventional single-outcome SC estimator and the multiple-outcome SC estimators constructed using $K=1$, 3, and 10 demeaned outcomes, respectively. The only difference between the single-outcome SC method and the multiple-outcome SC method when $K=1$ is the use of the demeaned outcomes. To measure their performances, we estimate the treatment effect on outcome 1 for the treated unit at $t=T_0+1$, and compute the average absolute bias and standard deviation of the estimators as well as the average rejection rate of the 10\% test in 5000 simulations. The null hypothesis of zero treatment effect is rejected in each simulation, if the RMSPE ratio for the treated unit is ranked among the largest 10\%, i.e., top 3 in our sample, in the permutation test.
	
	When the estimation improves with better pre-treatment fits and larger numbers of pre-treatment periods and related outcomes, we expect the average absolute bias and standard deviation of the estimators to be closer to $\sqrt{\frac{2}{\pi}}\approx0.8$ and 1, which are the mean and standard deviation of the standard normal distribution folded at the mean (half-normal distribution).\footnote{Note that the terms from the bias decomposition in the proof of Proposition \ref{Ch2_prop_1} are all close to 0, except the post-treatment transitory shock, which follows a standard normal distribution in our simulation.} When the distributions of the RMSPE ratio for the treated unit and the control units are close, so that there is little size distortion in the permutation test, we expect the average rejection rate of the 10\% test to be close to the nominal rejection rate at 10\%. The pre-treatment RMSPE is also reported as a measure of pre-treatment fit.

Several findings emerge from the results of the simulations, which are reported in Table~\ref{Ch2_tab_sim}.
First, when the number of pre-treatment periods or related outcomes increases, the bias decreases and becomes closer to the expected value for all estimators. Similar patterns are observed for the standard deviation of the estimators. Furthermore, the results clearly show that matching on more pre-treatment variables alleviates overfitting and improves out-of-sample prediction. Note that overfitting is a finite-sample problem due to erroneously including transitory shocks in the matching and should be distinguished from condition \eqref{eq:pretreat_fit} which is assumed to hold for a large $T_{0}$ and $K$.

Second, when the support of the predictors for the treated unit is the same with that for the control units ($d=1$), demeaning substantially improves estimation in terms of both bias and standard deviation, as the conventional single-outcome SC method is likely to perform poorly when the treated unit is far from the convex hull of the control units, whereas demeaning adjusts for the differences in the levels and improves the pre-treatment fit. Meanwhile, the rejection rate of the 10\% test is close to the nominal size, with or without demeaning in this case, since the RMSPE ratio for the treated unit is not conditional on a good pre-treatment fit.
When $d$ is smaller, the probability of obtaining a good pre-treatment fit increases for the treated unit while staying unchanged for the other units. As a result, the bias and standard deviation for both the conventional SC estimator and the demeaned SC estimator are smaller, and the improvement in estimation by demeaning the outcomes becomes less pronounced. In contrast, the distortion in the size of the test increases drastically, and demeaning alleviates the size distortion by improving the pre-treatment fits for all units.

Third, a larger number of pre-treatment periods or related outcomes also reduces the size distortion, since the pre-treatment RMSPE for the treated unit is less likely to be very close to 0.
Overall, the results show that the multiple-outcome SC method outperforms the conventional single-outcome method, when there are multiple related outcomes with stable differences in the level of the outcomes.\\

			\begin{threeparttable}[t]
				\centering
				\setlength{\tabcolsep}{3pt}  
				\resizebox{16.5cm}{!}{
				\begin{tabular}{cc@{\hskip 0.5ex}ccccc @{\hskip 0.5ex}ccccc @{\hskip 0.5ex}ccccc @{\hskip 0.5ex}ccccc}
					\toprule
					\multicolumn{3}{c}{} & \multicolumn{4}{c}{Conventional SC} & \multicolumn{1}{c}{} & \multicolumn{4}{c}{Multi-Outcome SC} & \multicolumn{1}{c}{} & \multicolumn{4}{c}{Multi-Outcome SC} & \multicolumn{1}{c}{} & \multicolumn{4}{c}{Multi-Outcome SC}                                                                  \\
					\multicolumn{3}{c}{} & \multicolumn{4}{c}{}                & \multicolumn{1}{c}{} & \multicolumn{4}{c}{($K=1$)}          & \multicolumn{1}{c}{} & \multicolumn{4}{c}{($K=3$)}          & \multicolumn{1}{c}{} & \multicolumn{4}{c}{($K=10$)}                                                                           \\
					\cmidrule(lr){4-7} \cmidrule(lr){9-12} \cmidrule(lr){14-17} \cmidrule(lr){19-22}
					$d$                  & $T_0$                               &                      & Pre-fit & Bias                                 & SD                   & Rej.                                 &                     & Pre-fit & Bias                                 & SD   & Rej.  &  & Pre-fit & Bias & SD   & Rej.  & & Pre-fit & Bias & SD   & Rej.  \\
					\midrule
					\addlinespace[2ex]
					1 & 5 &  & 1.65 & 1.94 & 2.91 & 0.10 &  & 0.51 & 1.43 & 1.81 & 0.10 &  & 0.82 & 1.32 & 1.67 & 0.10 &  & 0.99 & 1.22 & 1.54 & 0.10 \\ 
  1 & 10 &  & 1.63 & 1.64 & 2.47 & 0.10 &  & 0.83 & 1.27 & 1.61 & 0.10 &  & 1.04 & 1.19 & 1.50 & 0.10 &  & 1.14 & 1.12 & 1.40 & 0.10 \\ 
  1 & 20 &  & 1.62 & 1.52 & 2.36 & 0.10 &  & 1.03 & 1.18 & 1.49 & 0.10 &  & 1.15 & 1.11 & 1.41 & 0.10 &  & 1.20 & 1.08 & 1.36 & 0.10 \\
					\addlinespace[2ex]
					0.5 & 5 &  & 0.44 & 1.10 & 1.40 & 0.36 &  & 0.23 & 1.16 & 1.47 & 0.32 &  & 0.56 & 1.08 & 1.36 & 0.15 &  & 0.77 & 1.01 & 1.26 & 0.12 \\ 
					0.5 & 10 &  & 0.71 & 1.03 & 1.29 & 0.24 &  & 0.54 & 1.08 & 1.35 & 0.19 &  & 0.80 & 1.01 & 1.26 & 0.14 &  & 0.91 & 0.95 & 1.18 & 0.12 \\ 
					0.5 & 20 &  & 0.86 & 0.95 & 1.20 & 0.17 &  & 0.77 & 0.99 & 1.25 & 0.15 &  & 0.92 & 0.92 & 1.16 & 0.12 &  & 0.99 & 0.89 & 1.11 & 0.10 \\
					\addlinespace[2ex]
					0 & 5 &  & 0.24 & 1.05 & 1.32 & 0.57 &  & 0.15 & 1.09 & 1.37 & 0.48 &  & 0.48 & 1.04 & 1.31 & 0.19 &  & 0.71 & 0.99 & 1.23 & 0.13 \\ 
  0 & 10 &  & 0.54 & 0.98 & 1.23 & 0.34 &  & 0.45 & 1.03 & 1.29 & 0.25 &  & 0.72 & 0.96 & 1.20 & 0.15 &  & 0.86 & 0.90 & 1.13 & 0.13 \\ 
  0 & 20 &  & 0.73 & 0.92 & 1.16 & 0.23 &  & 0.68 & 0.96 & 1.21 & 0.18 &  & 0.86 & 0.90 & 1.13 & 0.14 &  & 0.93 & 0.87 & 1.09 & 0.12 \\ 
					\bottomrule
				\end{tabular}
				}
				\caption{Simulation comparison of the pre-treatment fit, average absolute bias, standard deviation, and rejection rate of the 10\% test for the single-outcome SC estimator, and the multiple-outcome SC estimators constructed using 1, 3 and 10 demeaned outcomes respectively, with varying $d$ and $T_0$, based on 5000 simulations for each setting.}\label{Ch2_tab_sim}
			\end{threeparttable}

\section{Empirical illustration: the 1990 German reunification}\label{Ch2_sec_illust}

The multiple-outcome SC method extends the applicability of the popular SC method to cases where only a small number of pre-treatment periods are available. Here we provide an empirical illustration that the results produced by matching on multiple related outcomes in a few pre-treatment periods are similar to those by matching on a single outcome in many pre-treatment periods.

We re-analyze the effect of the 1990 German reunification on West Germany's GDP per capita \citep{abadie2015comparative}. This example is ideal, because not only is the outcome of interest observed in many pre-treatment periods, but also numerous outcomes in the economic domain are available from the OECD statistics.
For comparison, we construct two SCs for West Germany, one by matching on the annual GDP per capita from 1960 to 1990, the other by matching on multiple related outcomes in 1989 only.\footnote{The list of outcomes in 1989 includes private social expenditure, energy supply per GDP, electricity generation, triadic patent families, real GDP growth, CPI, trade openness, total tax revenue, and GDP per capita (see Table~B.1 in the \href{https://drive.google.com/file/d/1LQvfM-_omq-LiCMpkI_iNxkGFUUKKBTT/view?usp=sharing}{Online Appendix}). The results produced by matching on multiple related outcomes in any single year from 1985-1989 are very similar.}

\begin{figure}[!htbp]
	\centering
	\includegraphics[width=.7\linewidth]{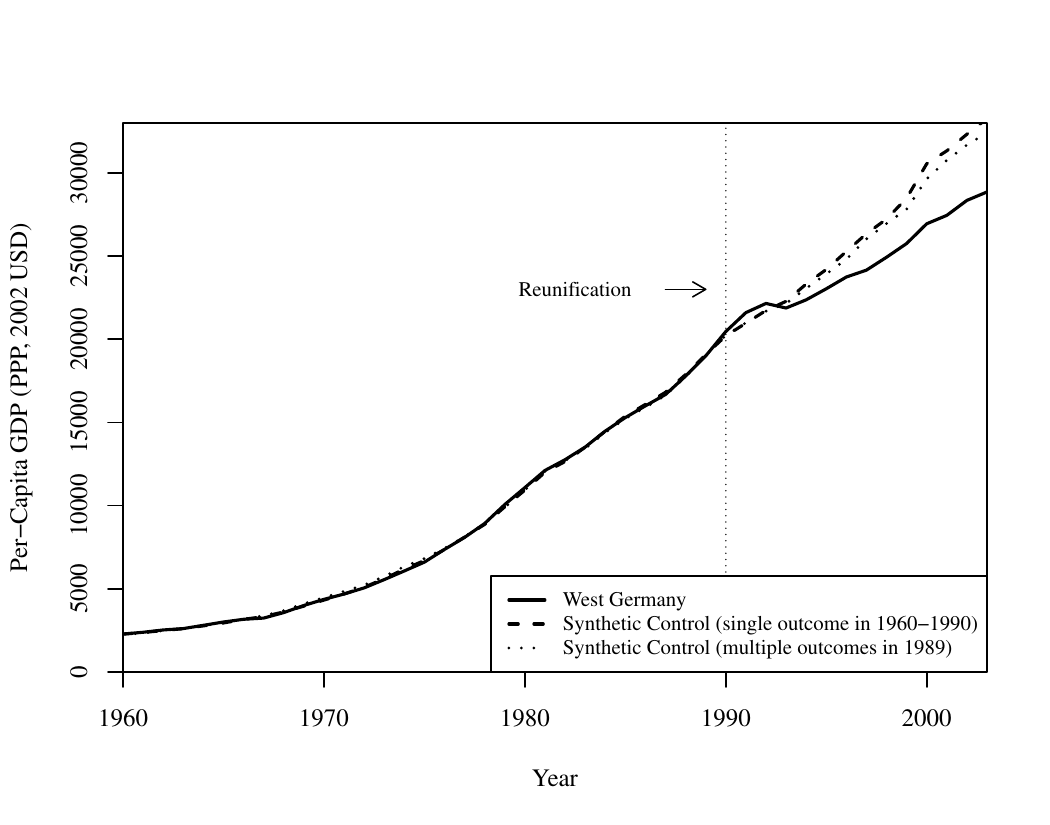}
	\caption{Re-analysis of the Economic Cost of the 1990 German Reunification}
	\label{Ch2_fig_Germany}
\end{figure}

Figure \ref{Ch2_fig_Germany} compares the trajectories of GDP per capita for West Germany and the two SCs. As we would hope for, both SCs are able to track West Germany's trajectory closely over a span of 30 years prior to the treatment.
This is not surprising for the single-outcome SC, which is constructed with the aim to match the values of the outcome observed over the 30 years as closely as possible.
The fact that the synthetic West Germany constructed using multiple outcomes only in the year of 1989 can track the trajectory of GDP per capita in West Germany so closely for so long demonstrates the ability of the multiple-outcome SC method to produce a SC that closely approximates the treated unit in terms of the underlying predictors, even when the number of pre-treatment periods is as small as one.
The estimated treatment effects, represented by the gaps between the trajectories of the realized outcome and the counterfactual outcomes, are also similar for the two SC estimators after 1990.

	\section{Conclusion}\label{Ch2_sec_con}
	
	This paper generalizes the conventional single-outcome SC method to a multiple-outcome framework, where the number of pre-treatment periods is supplemented with the number of related outcomes, improving the SC methods in  the situations when the number of available pre-treatment periods is small or when potential structural breaks impose limitations on the usable number of pre-treatment periods. We show that the bias of the multiple-outcome SC estimator diminishes with the number of pre-treatment periods and the number of related outcome variables. Our simulations show that the multiple-outcome SC estimator improves upon the conventional single-outcome SC estimator in terms of bias, standard deviation, and the size.

	Finally, let us offer additional guidance for applied researchers who might benefit from our method. 
	Just as the original SC method of \cite{abadie2015comparative} is not recommended ``when the pre-treatment fit is poor or the number of pre-treatment periods is small'', our method is best suited for applications where
	1) there are enough pre-treatment variables to match on (i.e., there is either a large number of related outcomes or  pre-treatment periods, or both), so overfitting is unlikely,\footnote{The number of pre-treatment variables required to avoid overfitting depends on the size of the donor pool. In the usual cases where the number of control units (donors) is in dozens, preferably at least 10 pre-treatment variables are needed.} and
	2) the SC constructed using a few control units can closely approximate the pre-treatment variables of the treated unit, which is unlikely to happen with a large number of outcome variables if they do not share the common linear structure.\footnote{There has not been an accepted rule of thumb to determine whether the pre-treatment fit is good other than relying on visual inspections. \cite{gardeazabal2017empirical} consider matches that have a pre-treatment mean absolute error smaller than 20\% of the mean values in the outcome variable as good matches, but this may not apply to cases where the mean of the outcome is close to 0.}

	\begin{appendices}\label{print_appendix}
		
		
		\section{Technical Derivations and Proofs}\label{Ch2_sec_proof}

		\setcounter{equation}{0}
		\renewcommand\theequation{\thesection.\arabic{equation}}		
		
		\subsection*{Comparison of multi-outcome and single-outcome SC weights}
		
 For better understanding of the multiple-outcome SC, we compare the  multiple-outcome SC weights with the single-outcome SC weights. Let $\boldsymbol{w}$ be the $J\times1$ vector and $\boldsymbol{Y}_{t,k}^{c}$ be the $J\times1$ vector of the outcome variable of the control units. Ignoring the facts that the elements of $\boldsymbol{w}$ sum to one and nonnegative, the conventional single-outcome SC weights for the $k$th outcome are obtained by minimizing
		\begin{equation}
		\sum_{t=1}^{T_{0}}(Y_{1,t,k}-\boldsymbol{w}'\boldsymbol{Y}_{t,k}^{c})^2.
			\label{crit1}
		\end{equation}
		The solution $\boldsymbol{\widetilde{w}}_{k}$ is simply the least squares estimator:
		\begin{equation}
			\boldsymbol{\widetilde{w}}_{k} = \left(\sum_{t=1}^{T_{0}}\boldsymbol{Y}_{t,k}^{c}\boldsymbol{Y}_{t,k}^{c'}\right)^{-1}\sum_{t=1}^{T_{0}}\boldsymbol{Y}_{t,k}^{c}Y_{1,t,k},
			\label{sol1}
		\end{equation}
		provided that $\sum_{t=1}^{T_{0}}\boldsymbol{Y}_{t,k}^{c}\boldsymbol{Y}_{t,k}^{c'}$ is invertible. For the multiple-outcome SC weights, we further stack $\boldsymbol{Y}_{t,k}^{c}$ over the $K$ outcomes. Then the criterion function is
		\begin{equation}
			\sum_{k=1}^{K}\sum_{t=1}^{T_{0}}(Y_{1,t,k}-\boldsymbol{w}'\boldsymbol{Y}_{t,k}^{c})^2
			\label{crit2}
		\end{equation}
		and the solution is
		\begin{equation}
			\boldsymbol{\widehat{w}} = \left(\sum_{k=1}^{K}\sum_{t=1}^{T_{0}}\boldsymbol{Y}_{t,k}^{c}\boldsymbol{Y}_{t,k}^{c'}\right)^{-1}\sum_{k=1}^{K}\sum_{t=1}^{T_{0}}\boldsymbol{Y}_{t,k}^{c}Y_{1,t,k},
			\label{sol2}
		\end{equation}
		provided that $\sum_{k=1}^{K}\sum_{t=1}^{T_{0}}\boldsymbol{Y}_{t,k}^{c}\boldsymbol{Y}_{t,k}^{c'}$ is invertible. The actual implementation of the SC method restricts the weights to sum one and to be non-negative, so the solutions will be different from \eqref{sol1} and \eqref{sol2}. Nonetheless, the least-squares estimators \eqref{sol1} and \eqref{sol2} show that $\boldsymbol{\widehat{w}}$ is not a simple average of $\boldsymbol{\widetilde{w}}_{k}$ across the $K$ outcomes. In addition, the minimized criterion functions are exactly zero, so both $\boldsymbol{\widetilde{w}}$ and $\boldsymbol{\widehat{w}}$ give the optimal fit for each outcome variable.

		\subsection*{Proofs}

		\begin{proof}[Proof of Proposition \ref{Ch2_prop_1}]


			To provide a unified framework for the biases of the multiple-outcome SC estimator and the single-outcome SC estimator, without loss of generality, we can write
			\begin{align}\label{Ch2_eq_weight}
				\tilde{w}_j^{(k)}= \hat{w}_j +\tilde{v}_j^{(k)},\ j=2,\dots,J+1,
			\end{align}
			where $-1\le \tilde{v}_j^{(k)}\le 1$.
			We emphasize that both sets of weights satisfy \eqref{eq:pretreat_fit} for both the observed predictors and the pre-treatment values of the $k$th outcome.
			
			Under the restrictions $\sum_{j=2}^{J+1} \hat{w}_j\boldsymbol{Z}_j=\boldsymbol{Z}_1$, the bias of the multiple-outcome SC estimator for outcome $k$ and $t>T_{0}$ is the expectation of
			\begin{align*}
				Y_{1t,k}^0-\sum_{j=2}^{J+1} \hat{w}_jY_{jt,k} & =\left(\boldsymbol{\mu}_1-\sum_{j=2}^{J+1} \hat{w}_j\boldsymbol{\mu}_j\right)'\boldsymbol{\lambda}_{t,k}+\varepsilon_{1t,k}-\sum_{j=2}^{J+1} \hat{w}_j\varepsilon_{jt,k}, \numberthis\label{Ch2_eq_e}
			\end{align*}
			and the bias of the single-outcome SC estimator is the expectation of
			\begin{align*}
				Y_{1t,k}^0 - \sum_{j=2}^{J+1}\tilde{w}_j^{(k)}Y_{jt,k} = & \left(\boldsymbol{\mu}_{1}-\sum_{j=2}^{J+1}\tilde{w}_j^{(k)}\boldsymbol{\mu}_{j}\right)'\boldsymbol{\lambda}_{t,k} + \varepsilon_{1t,k}-\sum_{j=2}^{J+1}\tilde{w}_j^{(k)}\varepsilon_{jt,k}                                                                                                           \\
				=                                                        & \left(\boldsymbol{\mu}_{1}-\sum_{j=2}^{J+1}\hat{w}_j\boldsymbol{\mu}_{j}\right)'\boldsymbol{\lambda}_{t,k} -\sum_{j=2}^{J+1}\tilde{v}_j^{(k)}\boldsymbol{\mu}_{j}'\boldsymbol{\lambda}_{t,k}+ \varepsilon_{1t,k}-\sum_{j=2}^{J+1}\tilde{w}_j^{(k)}\varepsilon_{jt,k}. \numberthis \label{Ch2_eq_e_SO}
			\end{align*}
			
			We can stack observations over the pre-treatment periods to replace the first term on the RHS of \eqref{Ch2_eq_e} and the first two terms on the RHS of \eqref{Ch2_eq_e_SO} with functions of the coefficients and the individual transitory shocks from the stacked expressions.
			Stacking the pre-treatment outcomes ${Y}_{it,k}$ over the $T_{0}$ pre-treatment periods, we have
			\begin{equation}
				\boldsymbol{Y}_{i,k} = \boldsymbol{\delta}_{k} + \boldsymbol{\theta}_{k}\boldsymbol{Z}_{i} + \boldsymbol{\lambda}_{k}\boldsymbol{\mu}_{i} + \boldsymbol{\varepsilon}_{i,k},
				\label{Ch2_eq_stack}
			\end{equation}
			where $\boldsymbol{Y}_{i,k}$, $\boldsymbol{\delta}_{k}$, and $\boldsymbol{\varepsilon}_{i,k}$ are $T_{0}\times 1$, and $\boldsymbol{\theta}_{k}$ and $\boldsymbol{\lambda}_{k}$ are $T_{0}\times r$ and $T_{0}\times F$, respectively.
			
			Since $\sum_{j=2}^{J+1} \tilde{w}_j^{(k)}\boldsymbol{Z}_j=\boldsymbol{Z}_1$, the restrictions $\sum_{j=2}^{J+1}\tilde{w}_j^{(k)}\boldsymbol{Y}_{j,k}=\boldsymbol{Y}_{1,k}$ can be simplified to
			\begin{align}
				\boldsymbol{\lambda}_{k}\left(\boldsymbol{\mu}_{1}-\sum_{j=2}^{J+1}\tilde{w}_j^{(k)}\boldsymbol{\mu}_{j}\right)=\sum_{j=2}^{J+1}\tilde{w}_j^{(k)}\boldsymbol{\varepsilon}_{j,k}-\boldsymbol{\varepsilon}_{1,k}.\label{Ch2_eq_simp_SO}
			\end{align}
			Similarly, using the multiple-outcome SC weights, we have
			\begin{align}
				\boldsymbol{\lambda}_{k}\left(\boldsymbol{\mu}_{1}-\sum_{j=2}^{J+1}\hat{w}_j\boldsymbol{\mu}_{j}\right)=\sum_{j=2}^{J+1}\hat{w}_j\boldsymbol{\varepsilon}_{j,k}-\boldsymbol{\varepsilon}_{1,k},\label{Ch2_eq_simp_MO}
			\end{align}
			which further simplifies \eqref{Ch2_eq_simp_SO} to
			\begin{align}
				-\boldsymbol{\lambda}_{k}\sum_{j=2}^{J+1}\tilde{v}_j^{(k)}\boldsymbol{\mu}_{j}=\sum_{j=2}^{J+1}\tilde{v}_j^{(k)}\boldsymbol{\varepsilon}_{j,k}.\label{Ch2_eq_simp1_SO}
			\end{align}
			
			Since the $K$ outcomes are determined by the same set of predictors in our multiple-outcome framework, we can further stack \eqref{Ch2_eq_stack} over the $K$ outcomes to get
			\begin{align}
				\boldsymbol{Y}_{i}=\boldsymbol{\delta}+\boldsymbol{\theta} \boldsymbol{Z}_i+\boldsymbol{\lambda}\boldsymbol{\mu}_i+\boldsymbol{\varepsilon}_{i},
			\end{align}
			where $\boldsymbol{Y}_{i}$, $\boldsymbol{\delta}$ and $\boldsymbol{\varepsilon}_{i}$ are $KT_0\times 1$, and $\boldsymbol{\theta}$ and $\boldsymbol{\lambda}$ are $KT_0\times r$ and $KT_0\times f$, respectively.
			And the restrictions $\sum_{j=2}^{J+1} \hat{w}_j\boldsymbol{Y}_j=\boldsymbol{Y}_1$ can be simplified to
			\begin{align}\label{Ch2_eq_simp}
				\boldsymbol{\lambda}\left(\boldsymbol{\mu}_1-\sum_{j=2}^{J+1} \hat{w}_j\boldsymbol{\mu}_j\right)=\sum_{j=2}^{J+1} \hat{w}_j\boldsymbol{\varepsilon}_{j}-\boldsymbol{\varepsilon}_{1}.
			\end{align}
			
			Condition \ref{Ch2_assume_rank} states that the $f\times f$ matrix $\boldsymbol{\lambda}'\boldsymbol{\lambda}$ has full rank, thus pre-multiplying $\left(\boldsymbol{\lambda}'\boldsymbol{\lambda}\right)^{-1}\boldsymbol{\lambda}'$ on both sides of \eqref{Ch2_eq_simp}, we have
			\begin{align}\label{Ch2_eq_simp1}
				\left(\boldsymbol{\mu}_1-\sum_{j=2}^{J+1} \hat{w}_j\boldsymbol{\mu}_j\right)=\left(\boldsymbol{\lambda}'\boldsymbol{\lambda}\right)^{-1}\boldsymbol{\lambda}'\left(\sum_{j=2}^{J+1} \hat{w}_j\boldsymbol{\varepsilon}_{j}-\boldsymbol{\varepsilon}_{1}\right),
			\end{align}
			so that \eqref{Ch2_eq_e} can be written as
			\begin{align*}
				Y_{1t,k}^0-\sum_{j=2}^{J+1} \hat{w}_jY_{jt,k}= & \boldsymbol{\lambda}_{t,k}'\left({\boldsymbol{\lambda}}'\boldsymbol{\lambda}\right)^{-1}{\boldsymbol{\lambda}}'\sum_{j=2}^{J+1} \hat{w}_j\boldsymbol{\varepsilon}_{j} \tag{$B_{1t,k}$}                                         \\
				& -\boldsymbol{\lambda}_{t,k}'\left({\boldsymbol{\lambda}}'\boldsymbol{\lambda}\right)^{-1}{\boldsymbol{\lambda}}'\boldsymbol{\varepsilon}_{1}+\varepsilon_{1t,k}-\sum_{j=2}^{J+1} \hat{w}_j\varepsilon_{jt,k}. \tag{$B_{2t,k}$}
			\end{align*}
			
			Similarly, pre-multiplying $\boldsymbol{\lambda}_{t,k}'(\boldsymbol{\lambda}_{k}'\boldsymbol{\lambda}_{k})^{-1}\boldsymbol{\lambda}_{k}'$ to both sides of \eqref{Ch2_eq_simp1_SO} gives
			\begin{equation}
				-\boldsymbol{\lambda}_{t,k}'\sum_{j=2}^{J+1}\tilde{v}_j^{(k)}\boldsymbol{\mu}_{j} = \boldsymbol{\lambda}_{t,k}'(\boldsymbol{\lambda}_{k}'\boldsymbol{\lambda}_{k})^{-1}\boldsymbol{\lambda}_{k}'\sum_{j=2}^{J+1}\tilde{v}_j^{(k)}\boldsymbol{\varepsilon}_{j,k}.
				\label{Ch2_stack4}
			\end{equation}
			
			Replacing \eqref{Ch2_eq_simp1} and \eqref{Ch2_stack4} into \eqref{Ch2_eq_e_SO}, we have
			\begin{align*}
				Y_{1t,k}^0-\sum_{j=2}^{J+1}\tilde{w}_j^{(k)}Y_{jt,k}
				= & \boldsymbol{\lambda}_{t,k}'(\boldsymbol{\lambda}'\boldsymbol{\lambda})^{-1}\boldsymbol{\lambda}'\sum_{j=2}^{J+1}\hat{w}_j\boldsymbol{\varepsilon}_{j} \tag{$B_{1t,k}$} \label{B1}                                       \\
				& -\boldsymbol{\lambda}_{t,k}'(\boldsymbol{\lambda}'\boldsymbol{\lambda})^{-1}\boldsymbol{\lambda}'\boldsymbol{\varepsilon}_{1}+\varepsilon_{1t,k}-\sum_{j=2}^{J+1}\hat{w}_j\varepsilon_{jt,k} \tag{$B_{2t,k}$} \label{B2} \\
				& -\sum_{j=2}^{J+1}\tilde{v}_j^{(k)}\varepsilon_{jt,k} \tag{$B_{3t,k}$}   \label{B3}                                                                                                                                      \\
				& +\boldsymbol{\lambda}_{t,k}'(\boldsymbol{\lambda}_{k}'\boldsymbol{\lambda}_{k})^{-1}\boldsymbol{\lambda}_{k}'\sum_{j=2}^{J+1}\tilde{v}_j^{(k)}\boldsymbol{\varepsilon}_{j,k}. \tag{$B_{4t,k}$}\label{B4}
			\end{align*}
			
			Given Condition~\ref{Ch2_assume_error} and since the SC weights are independent of the observations for $t>T_0$, $B_{2t,k}$ and $B_{3t,k}$ have zero mean, whereas $B_{1t,k}$ and $B_{4t,k}$ do not because $\hat{w}_j$ and $\tilde{w}_j^{(k)}$ are functions of $\boldsymbol{\varepsilon}_{j}$ \citep{botosaru2019role}.
			
			It is shown that $\mathbb{E}\left\vert B_{4t,k}\right\vert=O\left(\frac{1}{\sqrt{T_{0}}}\right)$ in \cite{abadie2010synthetic}.
			Following closely the proof in Appendix B of \cite{abadie2010synthetic}, we show that $\mathbb{E}\left\vert B_{1t,k}\right\vert=O\left(\frac{1}{\sqrt{KT_{0}}}\right)$.
			We can rewrite $B_{1t,k}$ as
			\begin{align}
				B_{1t,k}=\sum_{j=2}^{J+1} \hat{w}_j\sum_{q=1}^{K}\sum_{s=1}^{T_0}\boldsymbol{\lambda}_{t,k}'\left(\sum_{l=1}^{K}\sum_{n=1}^{T_0}\boldsymbol{\lambda}_{n,l}\boldsymbol{\lambda}_{n,l}'\right)^{-1}\boldsymbol{\lambda}_{s,q}\varepsilon_{js,q}.
			\end{align}
			
			Let the largest element of $|\boldsymbol{\lambda}_{t,k}|$ for $t=1,\dots,T$ and $k=1,\dots,K$ be bounded from above by $\bar{\lambda}$.
			Under Condition \ref{Ch2_assume_rank} and using the Cauchy–Schwarz Inequality, we have
			\begin{align*}
				& \left(\boldsymbol{\lambda}_{t,k}'\left(\sum_{l=1}^{K}\sum_{n=1}^{T_0}\boldsymbol{\lambda}_{n,l}\boldsymbol{\lambda}_{n,l}'\right)^{-1}\boldsymbol{\lambda}_{s,q}\right)                                                                                                                                                                                                    \\
				\le & \left(\boldsymbol{\lambda}_{t,k}'\left(\sum_{l=1}^{K}\sum_{n=1}^{T_0}\boldsymbol{\lambda}_{n,l}\boldsymbol{\lambda}_{n,l}'\right)^{-1}\boldsymbol{\lambda}_{t,k}\right)^{\frac{1}{2}}\left(\boldsymbol{\lambda}_{s,q}'\left(\sum_{l=1}^{K}\sum_{n=1}^{T_0}\boldsymbol{\lambda}_{n,l}\boldsymbol{\lambda}_{n,l}'\right)^{-1}\boldsymbol{\lambda}_{s,q}\right)^{\frac{1}{2}} \\
				\le & \left(\frac{\bar{\lambda}^2f}{KT_0\underline{\xi}}\right).
			\end{align*}
			
			Let $\bar{\varepsilon}_j=\sum_{q=1}^{K}\sum_{s=1}^{T_0}\boldsymbol{\lambda}_{t,k}'\left(\sum_{l=1}^{K}\sum_{n=1}^{T_0}\boldsymbol{\lambda}_{n,l}\boldsymbol{\lambda}_{n,l}'\right)^{-1}\boldsymbol{\lambda}_{s,q}\varepsilon_{js,q}$.
			Then by H\"{o}lder's Inequality and the norm monotonicity, we have
			$$|B_{1t,k}| \le \sum_{j=2}^{J+1} \hat{w}_j|\bar{\varepsilon}_j| \le \left(\sum_{j=2}^{J+1} |\hat{w}_j|^q\right)^{1/q}\left(\sum_{j=2}^{J+1} |\bar{\varepsilon}_j|^p\right)^{1/p}\le \left(\sum_{j=2}^{J+1} |\bar{\varepsilon}_j|^p\right)^{1/p},$$
			with $p,q>1$ and $\frac{1}{p}+\frac{1}{q}=1$.
			
			Using H\"{o}lder's Inequality again, we have
			$$\mathbb{E}\left[\sum_{j=2}^{J+1} |\bar{\varepsilon}_j|\right] \le \mathbb{E}\left[\left(\sum_{j=2}^{J+1} |\bar{\varepsilon}_j|^p\right)^{1/p}\right] \le \left(\mathbb{E}\left[\sum_{j=2}^{J+1} |\bar{\varepsilon}_j|^p\right]\right)^{1/p}=\left(\sum_{j=2}^{J+1} \mathbb{E}|\bar{\varepsilon}_j|^p\right)^{1/p}.$$
			
			Then using Rosenthal’s Inequality, we have
			$$\mathbb{E}|\bar{\varepsilon}_j|^p \le C\left(p\right)\left(\frac{\bar{\lambda}^2f}{KT_0\underline{\xi}}\right)^p\text{max}\left\{\sum_{q=1}^{K}\sum_{s=1}^{T_0}\mathbb{E}|\varepsilon_{js,q}|^p,\left(\sum_{q=1}^{K}\sum_{s=1}^{T_0}\mathbb{E}|\varepsilon_{js,q}|^2\right)^{p/2}\right\},$$
			where the constant $C\left(p\right)=\mathbb{E}\left(\phi-1\right)^p$ with $\phi$ being a Poisson random variable with parameter 1.
			
			Let $\bar{m}_p=\text{max}_j \frac{1}{KT_0}\sum_{q=1}^{K}\sum_{s=1}^{T_0}\mathbb{E}|\varepsilon_{js,q}|^p$, then we have
			\begin{align}
				\mathbb{E}|B_{1t,k}|\le C\left(p\right)^{1/p}\left(\frac{\bar{\lambda}^2f}{\underline{\xi}}\right)J^{1/p}\text{max}\left\{\frac{\bar{m}_p^{1/p}}{\left(KT_0\right)^{1-1/p}},\frac{\bar{m}_2^{1/2}}{{\left(KT_0\right)}^{1/2}}\right\}.\label{Ch2_eq_bound}
			\end{align}
			
			Therefore, $\mathbb{E}\left\vert B_{1t,k}\right\vert=O\left(\frac{1}{\sqrt{KT_{0}}}\right)$, and $\mathbb{E}\left(\widehat{\tau}_{1t,k}-\tau_{1t,k}\right)\rightarrow 0 \enspace \text{as} \enspace KT_0\rightarrow \infty,$
			i.e., the bias of the multiple-outcome SC estimator is bounded by a function that goes to zero when the number of outcomes in the domain or the pre-treatment periods goes to infinity.
			
			We have shown that the bias of the conventional single-outcome SC method is usually $O\left(\frac{1}{\sqrt{T_{0}}}\right)$, but if the single-outcome SC weights coincide with the multiple-outcome SC weights, in which case $\tilde{v}_j^{(k)}=0$ and $\tilde{w}_j^{(k)}=\hat{w}_j$, then the order becomes $O\left(\frac{1}{\sqrt{KT_{0}}}\right)$.
			
		\end{proof}

		\begin{proof}[Proof of Corollary \ref{Ch2_coro_1}]
			
			The bias for the demeaned SC estimator is
			\begin{align*}
				\widehat{\tau}_{1t,k}^{\text{DM}}-\tau_{1t,k} & =\dot{Y}_{1t,k}-\sum_{j=2}^{J+1} \hat{w}_j\dot{Y}_{jt,k}-Y_{1t,k}^1+Y_{1t,k}^0                                       \\
				& =Y_{1t,k}^1-\frac{1}{T_0}\sum_{s=1}^{T_0}{Y}_{1s,k}-\sum_{j=2}^{J+1} \hat{w}_j\dot{Y}_{jt,k}^0-Y_{1t,k}^1+Y_{1t,k}^0 \\
				& =\dot{Y}_{1t,k}^0-\sum_{j=2}^{J+1} \hat{w}_j\dot{Y}_{jt,k}^0.
			\end{align*}
			
			Notice that the demeaned equation for $Y_{it,k}^0$ retains the interactive fixed effects structure:
			\begin{align*}
				\dot{Y}_{it,k}^0 & =Y_{it,k}^0-\frac{1}{T_0}\sum_{s=1}^{T_0} Y_{is,k}^0                                                                                               \\
				& =\dot{\delta}_{t,k}+\boldsymbol{Z}_i'\dot{\boldsymbol{\theta}}_{t,k}+\boldsymbol{\mu}_i'\dot{\boldsymbol{\lambda}}_{t,k}+\dot{\varepsilon}_{it,k}.
			\end{align*}
			
			We can thus follow similar steps to show that $\mathbb{E}\left(\widehat{\tau}_{1t,k}^{\text{DM}}-\tau_{1t,k}\right)\rightarrow 0 \enspace \text{as} \enspace KT_0\rightarrow \infty$ under the demeaned outcomes, and conditions~\ref{Ch2_assume_error} and \ref{Ch2_assume_rank}.
			
		\end{proof}

		
		\subsection*{Treatment effect on the untreated}\label{Ch2_sec_untreated}
		
		In the discussion of the theoretical framework, we have been focusing on the setting with a single treated unit and many control units, where we estimate the treatment effects on the treated. However, there are cases with many treated units, and we may wish to estimate the treatment effects on the untreated by constructing a synthetic unit for the untreated unit using the treated units.
		
		Without loss of generality, suppose that unit 1 remains untreated within the window of observation, while all the other units are treated from $t=T_0+1$ onwards.
		Recall that the treated potential outcome is $Y_{it,k}^1=Y_{it,k}^0+\tau_{it,k}$. Since we have not imposed any assumption on the treatment effects except treating them as fixed given the sample, the treated potential outcomes may not have the interactive fixed effects functional forms or depend on the same predictors as the untreated potential outcomes.
		As a consequence, a synthetic unit that matches the untreated unit in the pre-treatment matching variables may not credibly reproduce its counterfactual outcomes after the treatment, even if it is similar to the untreated unit in the underlying predictors of the untreated potential outcomes.
		Therefore, in order to estimate the treatment effects on the untreated, we assume that the treatment effects are determined by the same predictors of the untreated potential outcomes, that is\footnote{Note, this assumption is more general than assuming that the treatment effects are constant. A similar assumption is discussed in \cite{athey2021matrix}, where the treatment effect is assumed to have a low-rank pattern.}	
		\begin{align}
			\tau_{it,k}=\alpha_{t,k}+\boldsymbol{Z}_i'\boldsymbol{\beta}_{t,k}+\boldsymbol{\mu}_i'\boldsymbol{\gamma}_{t,k}, \enspace \forall i,t,k,
			\label{eq:assume_ITE}
		\end{align}
		where $\alpha_{t,k}$ is the time trend in $\tau_{it,k}$, and $\boldsymbol{\beta}_{t,k}$ and $\boldsymbol{\gamma}_{t,k}$ are the outcome-specific coefficients of the observed and unobserved predictors respectively.

		The multiple-outcome SC estimator for $\tau_{1t,k}$, the treatment effect on the untreated unit, can then be constructed as
		\begin{align}
			\check{\tau}_{1t,k}=\sum_{j=2}^{J+1} \hat{w}_jY_{jt,k}-Y_{1t,k}.
		\end{align}
		
		Note that Eq.~\ref{eq:assume_ITE}, together with Eq.~1 and Eq.~2 in the main text, implies that the treated potential outcome $k$ for unit $i$ at time $t$ is given by
		\begin{align}\label{Ch2_eq_IFE1}
			Y_{it,k}^1=\left(\delta_{t,k}+\alpha_{t,k}\right)+\boldsymbol{Z}_i'\left(\boldsymbol{\theta}_{t,k}+\boldsymbol{\beta}_{t,k}\right)+\boldsymbol{\mu}_i'\left(\boldsymbol{\lambda}_{t,k}+\boldsymbol{\gamma}_{t,k}\right)+\varepsilon_{it,k}.
		\end{align}
		Since the treated potential outcome $Y_{it,k}^1$ has an interactive fixed effects structure, we can similarly show that the bias of $\check{\tau}_{1t,k}$ vanishes as the number of pre-treatment periods or related outcomes grows.
		
		\begin{coro}\label{Ch2_coro_2}
			Suppose that $J$ is fixed, whereas $T_0$ and $K$ are increasing, the demeaned outcomes are used, \eqref{eq:assume_ITE} holds, and the technical conditions are satisfied, then for any $t>T_0$,
			\begin{align*}
				\left\vert \mathbb{E}(\check{\tau}_{1t})-\tau_{1t,k}\right\vert & =O\left(\frac{1}{\sqrt{KT_{0}}}\right).
			\end{align*}
		\end{coro}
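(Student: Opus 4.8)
The plan is to reduce Corollary~\ref{Ch2_coro_2} to the bias bound already established for the treatment-on-the-treated case in Proposition~\ref{Ch2_prop_1} and Corollary~\ref{Ch2_coro_1}. The lever is \eqref{Ch2_eq_IFE1}: under \eqref{eq:assume_ITE}, the treated potential outcome $Y_{it,k}^1$ obeys an interactive fixed effects model in the \emph{same} predictors $\boldsymbol{Z}_i,\boldsymbol{\mu}_i$ as $Y_{it,k}^0$, with the trend, the observed-predictor loadings and the factor loadings merely shifted to $\delta_{t,k}+\alpha_{t,k}$, $\boldsymbol{\theta}_{t,k}+\boldsymbol{\beta}_{t,k}$ and $\boldsymbol{\lambda}_{t,k}+\boldsymbol{\gamma}_{t,k}$. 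So the strategy is to run the argument behind Corollary~\ref{Ch2_coro_1} verbatim, with $\boldsymbol{\lambda}_{t,k}$ replaced by $\boldsymbol{\lambda}_{t,k}+\boldsymbol{\gamma}_{t,k}$ in the post-treatment period only.

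First I would observe that the weights $(\hat{w}_2,\dots,\hat{w}_{J+1})$ are literally the ones from Corollary~\ref{Ch2_coro_1}. Because treatment begins at $t=T_0+1$, every unit --- the untreated unit $1$ as well as the units $2,\dots,J+1$ that are treated afterwards --- is untreated throughout $t\le T_0$, so the demeaned pre-treatment outcomes entering \eqref{eq:pretreat_fit} coincide with the demeaned untreated potential outcomes for all units, and the system that pins down the weights is unchanged. Consequently the rank/eigenvalue condition (Condition~\ref{Ch2_assume_rank}) still controls $\boldsymbol{\nu}:=\sum_{j=2}^{J+1}\hat{w}_j\boldsymbol{\mu}_j-\boldsymbol{\mu}_1$ through the matrix $\sum_{k\in\mathbb{K}}\sum_{t\le T_0}\dot{\boldsymbol{\lambda}}_{t,k}\dot{\boldsymbol{\lambda}}_{t,k}'$, whose smallest eigenvalue is of order $KT_0$.

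Next I would decompose the estimation error. For $t>T_0$ the observed outcomes are $Y_{jt,k}=Y_{jt,k}^1$ for $j\ge2$ and $Y_{1t,k}=Y_{1t,k}^0$, while $\tau_{1t,k}=Y_{1t,k}^1-Y_{1t,k}^0$, so $\check{\tau}_{1t,k}-\tau_{1t,k}=\sum_{j\ge2}\hat{w}_j\dot{Y}_{jt,k}-\dot{Y}_{1t,k}-\tau_{1t,k}$, the demeaning corrections being formed from pre-treatment (hence untreated) outcomes. Substituting \eqref{Ch2_eq_IFE1} and \eqref{Ch2_eq_IFE0}, the trend terms cancel because $\sum_{j\ge2}\hat{w}_j=1$, the $\boldsymbol{Z}$-terms cancel because $\sum_{j\ge2}\hat{w}_j\boldsymbol{Z}_j=\boldsymbol{Z}_1$, and what survives is $\boldsymbol{\nu}'\bigl(\boldsymbol{\lambda}_{t,k}+\boldsymbol{\gamma}_{t,k}-\bar{\boldsymbol{\lambda}}_k\bigr)$, with $\bar{\boldsymbol{\lambda}}_k=\tfrac1{T_0}\sum_{s\le T_0}\boldsymbol{\lambda}_{s,k}$, plus the post-treatment transitory block $\sum_{j\ge2}\hat{w}_j\varepsilon_{jt,k}-\varepsilon_{1t,k}$ minus its pre-treatment average $\tfrac1{T_0}\sum_{s\le T_0}(\sum_{j\ge2}\hat{w}_j\varepsilon_{js,k}-\varepsilon_{1s,k})$. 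This is exactly the decomposition in the proof of Corollary~\ref{Ch2_coro_1}, with the single change that the post-treatment loading now carries the extra summand $\boldsymbol{\gamma}_{t,k}$. Taking expectations, the post-treatment transitory block has mean zero since $\varepsilon_{jt,k}$ ($t>T_0$) is mean zero and independent of the pre-treatment information determining $\hat{w}_j$; the pre-treatment average block is the $O(1/\sqrt{KT_0})$ overfitting term already bounded in Corollary~\ref{Ch2_coro_1}; and for the leading term, $\boldsymbol{\nu}$ still solves the over-identified system $\boldsymbol{\nu}'\dot{\boldsymbol{\lambda}}_{s,k}=-\dot{\eta}_{s,k}$ ($s\le T_0$, $k\in\mathbb{K}$, with $\dot{\eta}_{s,k}:=\sum_{j\ge2}\hat{w}_j\dot{\varepsilon}_{js,k}-\dot{\varepsilon}_{1s,k}$), so expressing $\boldsymbol{\nu}$ via its projection onto $\{\dot{\boldsymbol{\lambda}}_{s,k}\}$ and applying Cauchy--Schwarz with the eigenvalue lower bound yields $\bigl\vert\mathbb{E}[\boldsymbol{\nu}'(\boldsymbol{\lambda}_{t,k}+\boldsymbol{\gamma}_{t,k}-\bar{\boldsymbol{\lambda}}_k)]\bigr\vert=O(1/\sqrt{KT_0})$, provided $\|\boldsymbol{\gamma}_{t,k}\|$ is bounded uniformly in $t,k$. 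Collecting the three pieces gives the stated rate.

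The main obstacle is more a matter of care than of difficulty, and it lies in the second step above: one must verify that promoting units $2,\dots,J+1$ to be the \emph{treated} units after $T_0$ leaves the pre-treatment matching problem --- and hence the weights, the relevant eigenvalue, and the structure of $\boldsymbol{\nu}$ --- completely untouched, and that the technical conditions are stated (or can be restated) so that $(\alpha_{t,k},\boldsymbol{\beta}_{t,k},\boldsymbol{\gamma}_{t,k})$ obey the same boundedness and cross-period/outcome variation restrictions as $(\delta_{t,k},\boldsymbol{\theta}_{t,k},\boldsymbol{\lambda}_{t,k})$. Once that is granted, $\boldsymbol{\lambda}_{t,k}+\boldsymbol{\gamma}_{t,k}$ simply plays the role of $\boldsymbol{\lambda}_{t,k}$ in the post-treatment period, nothing else changes, and the proof of Corollary~\ref{Ch2_coro_1} applies mutatis mutandis.
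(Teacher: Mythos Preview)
Your proposal is correct and takes essentially the same approach as the paper, which merely states that the proof is similar to that of Proposition~\ref{Ch2_prop_1} and omits it. You have in fact supplied the details the paper leaves implicit: that the pre-treatment matching problem (and hence the weights and the eigenvalue condition on $\boldsymbol{\lambda}$) is unchanged because all units are untreated before $T_0$, and that in the post-treatment period the role of $\boldsymbol{\lambda}_{t,k}$ is taken over by $\boldsymbol{\lambda}_{t,k}+\boldsymbol{\gamma}_{t,k}$ via \eqref{Ch2_eq_IFE1}, after which Corollary~\ref{Ch2_coro_1} applies verbatim.
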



		\begin{proof}[Proof of Corollary \ref{Ch2_coro_2}]
			The proof is similar to that of Proposition 1 and thus omitted.
		\end{proof}

	\end{appendices}

	\bigskip
	\begin{center}
		{\large\bf SUPPLEMENTARY MATERIALS}
	\end{center}
	
	\begin{description}
		
		\item[Online appendix:] The online appendix contains the data source and full analysis of the empirical application, details of the estimation and inference procedures, and additional results. (\href{https://drive.google.com/file/d/1LQvfM-_omq-LiCMpkI_iNxkGFUUKKBTT/view?usp=sharing}{.pdf file})
		
		\item[Datasets and R codes:] Datasets and R codes used for the simulations and the empirical application, as well as a readme file that describes the contents. (\href{https://drive.google.com/file/d/1GkwlrGoP5R0f4I3i3bifSv83r5_Ji8Jg/view?usp=sharing}{.zip file})
		
	\end{description}

	\bibliography{References2}
	\bibliographystyle{apalike}
	
\end{document}